\newtheorem{theorem}{Theorem}
\newtheorem{definition}{Definition}[section]
\title{Practical Secure Inference Algorithm for Fine-tuned Large Language Model Based on Fully Homomorphic Encryption}
\author{Zhang Ruoyan, Zheng Zhongxiang\thanks{Corresponding author at: Communication University of China, Beijing, China. E-mail address: zhengzx@cuc.edu.cn}, Bao Wankang}
\begin{document}
\maketitle

\begin{abstract}
Large language models(LLMs) are currently at the forefront of the machine learning field, which show a broad application prospect but at the same time expose some risks of privacy leakage. Both the training datasets and the user's data inputted during interaction are facing security issues, which need to be solved urgently before their further development.

To address this problem, we combined privacy-preserving techniques such as Fully Homomorphic Encryption(FHE) and provable security theory with Parameter-Efficient Fine-Tuning(PEFT) to propose an efficient and secure inference scheme for LLMs that protects both the user-side's input and the server-side's private parameters. More specially, we focus on pre-trained LLMs which rely on open-sourced base model and then fine-tuned with the private datasets by LoRA. This is a popular road-map for Vertical Domain Large Models such as LawGPT and BenTsao.

To achieve this efficient and secure inference LLM scheme, we use two key technologies that are summarized below.
\begin{itemize}
    \item Firstly, we divide the whole model into two parts, denoted as the public part and the private part. The weights of public part are publicly accessible(e.g. the open-sourced base model) while the private part needs to be protected(e.g. the LoRA matrices). Then the public part is deployed on the client side, and the server keeps the private part. In this way, the overhead brought by computing on private data can be greatly reduced. 
 
    \item Secondly, we propose a general method to transform a linear layer into another one which provides security against model extraction attacks and preserves its original functionality, which denoted as Private Linear Layer(PLL). Then we use this method on the LoRA matrices of the server-side where PLL changes the computation of the LoRA matrices in a way that accomplishes correct inference and makes sure that the server protects their private weights without restricting the user's input. We also show that the difficulty of performing model extraction attacks for PLL can be reduced to the well-known hard problem Learning with Errors(LWE). Combing this method with FHE, we can get an inference algorithm for fine-tuned LLM which protects user's input and the server's private weights at the same time.
\end{itemize}
In this paper, we use the open-source model ChatGLM2-6B as the base model which is fine-tuned by LoRA. Experimental results show the inference efficiency of our scheme reaches 1.61s/token which displays that the scheme has good practicality.
\end{abstract}

\section{Introduction}
ChatGPT sparked heated debates as soon as it was launched in 2022, and has been in the public eye so far because of its powerful language understanding and generation with coherent and logical contextual semantics. It is now a hot topic for researchers and a direction of development for companies. And it has long been integrated into people's lives as a tool that has changed modern human-computer interaction. Such large language models(LLMs) have shown a broad application prospect, however, they also expose the risk of privacy leakage.

For accurate responses to common sense questions, all machine learning models display varying degrees of memorization phenomena\cite{Memory}. LLMs likewise have a certain tendency to memorize the underlying training data directly. When given the top text of a training dataset, it has a high probability of directly outputting the followed text of that data in the training dataset, e.g. Carlini et al. used this method to extract 600 pieces of training data from GPT-2\cite{Carlini}. Their demonstration reveals that the training data is less secure, and that this memorization enables an attacker to extract the underlying training data by posing questions wherever they have access to the model. According to the privacy policy of OpenAI\cite{openai_privacy}, the company that developed ChatGPT, they collect personal information that is included in the inputs, file uploads, or feedbacks and use these information to improve their services and conduct research. This suggests that users' input data may also enter the training set and face the extraction problem described above in practice. Because current terminal devices still cannot easily deploy LLMs locally, private inputs have to be transmitted to the server when users interact. These private data can easily be intercepted by others, recorded by the server and used for training purposes if it is not encrypted.

For the purpose of ensuring the security of the input data, privacy computing techniques for small deep learning models such as Deep Neural Networks(DNNs) and Convolutional Neural Networks(CNNs) have been extensively studied. For example, in 2018, Badawi et al. proposed a CNNs based entirely on Fully Homomorphic Encryption(FHE) that is able to homomorphically classify encrypted images\cite{FHECNN}. Their solution achieved sufficient security level and reasonable classification accuracy. However, such research on large models is currently in its initial stages. Nowadays, most of large models are based on the transformer model\cite{attention}, which is a neural network architecture based on the self-attention, and is widely used in natural language processing tasks. Nevertheless, there are a few existing studies on privacy-preserving techniques for the transformer model. Chen et al. proposed a privacy-preserving inference method THE-X for approximation computation of transformer models based on FHE in 2022\cite{THE-X}. They used a series of polynomial approximation nonlinear operations to implement the inference process. While approximating the activation function for inference is difficult to achieve the correct output and good efficiency\cite{FHECNN}. As the model size becomes larger and more approximations are obtained, the efficiency and accuracy of the model becomes more difficult to ensure. Private transformer inference systems such as Iron\cite{Iron}, BumbleBee\cite{BumbleBee}, and CipherGPT\cite{CipherGPT} were proposed in 2022 and 2023. These systems propose secure two-party protocols for complex non-linear functions, including softmax, GELU, and LayerNorm. However, the parameter scale of the transformer model they used and the efficiency cannot meet practical requirements. In September 2023, Dong et al. proposed an end-to-end secure transformer inference framework PUMA for large transformer models under Secure Multi-Party Computation based on top of SecretFlow-SPU\cite{PUMA}, and realized the lossless inference of pre-trained large models under MPC by designing high-precision operators such as softmax and GELU. The inference efficiency of PUMA on LLAMA-7B reaches 200s/token.

Another common method to improve the efficiency of ciphertext computation when using homomorphic encryption is to circumvent complex ciphertext computation with transmission. The complex computation is passed back to the data owner to be completed in plaintext state. Using this idea, Lam et al. proposed a hybrid PE-NN model based on FHE in 2023\cite{PE-NN}, where a pre-trained open-source image prediction CNN model is followed by an additional linear network. Then the CNN model is deployed on the client side and the linear network on the server side. The client uses the local open-source network to compute the input, then encrypts and transmits the encrypted data to the server. And the server returns an encrypted predicted output through the private network. This method guarantees that user data is transmitted in ciphertext while obviating the need for nonlinear operations to be performed using private computation via FHE. By substituting the computationally intensive ciphertext evaluations on the server side with low-cost plaintext computations on the client side, this approach effectively reduces the overhead associated with private computations. However, the model's capability is bounded by the ability of the CNN, because the private part is made up of a linear network which does not improve the ability of the CNN model to deal with problems. Besides, a linear network is weak in terms of security against Model Extraction Attacks, an attacker can calculate the model weights by modifying the input data and observing changes in the output results, which poses a significant security risk.

To avoid these problems, we focus on the Low-Rank Adaptation(LoRA)\cite{LoRA} technique for fine-tuning LLMs. In order to flexibly adapt models to the specific task requirements of a vertical domain, fine-tuning on pre-trained models has become a paradigm for dealing with natural language processing tasks, which can lead to huge performance gains on specific tasks\cite{QLoRA}\cite{lee2024platypusquickcheappowerful}\cite{chaves2024txllmlargelanguagemodel}. However, as models grow in size, fine-tuning all model parameters on standard consumer hardware becomes impractical. To address this problem, an efficient fine-tuning technique called Parameter-Efficient Fine-Tuning(PEFT) has become popular. PEFT can fix most of the pre-trained model parameters and fine-tune only a small number or additional model parameters. Therefore, PEFT significantly reduces computational and storage costs while achieving performance comparable to full parameter fine-tuning. The common local fine-tuning techniques are Prefix Tuning\cite{PrefixTuning}, Prompt Tuning\cite{PromptTuning}, Low-Rank Adaptation(LoRA)\cite{LoRA}, etc. Among them, LoRA fine-tuning technique\cite{LoRA} is one of the most versatile and at the same time the most effective fine-tuning methods. More specifically, LoRA freezes the pre-trained model weights and injects trainable bypass matrices into each layer of the Transform architecture. These bypass matrices are used to simulate full-parameter fine-tuning based on the intrinsic low-rank nature of large models, and the use of two low-rank matrices in the bypass matrices reduces the number of parameters that need to be tuned in the model from $d*d$ to $2*r*d\,(r\ll d)$ without changing the dimension of the output data. Finally when saving the fine-tuning weights, one only needs to save the portion of the low-rank matrices. In this way, LLMs with LoRA will not change the architecture of the original model, and will not produce a delay in inference.

Combining private technologies with LoRA, we propose an efficient and secure inference scheme for LLMs with LoRA based on FHE. We deploy the base LLM on the client and the low-rank matrix obtained by fine-tuning on the server. In the inference process, the user encrypts the data when it needs to pass through the low-rank matrix section and sends it to the server, then the server completes ciphertext computation before returning the ciphertext result back. Finally, the user decrypts the result to finish the inference. This method can greatly reduce the overhead brought by computing on the private data and improve the inference efficiency.

In this process, an important security issue is that how to protect the private weights on the server side because the low-rank matrices is a simple linear layer, and the parameters of this part can be easily computed by an attacker using Model Extraction Attacks. To deal with the problem, we introduce a general transform method which can be applied to any linear layer to gain an extra protection against Model Extraction Attacks known as Private Linear Layer(PLL). We show that PLL can improve the security of the original linear layer and preserves its original functionality. We also show that the difficulty of performing Model Extraction Attacks for PLL can be reduced to a well-known hard problem LWE. 

By combining these techniques, our solution can thoroughly minimize ciphertext computations to achieve high efficiency and simultaneously protect the security of the user's input of the client side and the private weights of the server side. The article is organized as follows. 
In Section 2, we introduce the mathematical notations and related techniques utilized in this paper. Then we propose the architecture of our scheme with its security analysis in  Section 3. And in Section 4, we give the experimental results of our scheme including a comparison with former methods. Finally, we summarize our contributions and discuss potential future work in Section 5.



\section{Background}
\subsection{Notation}
The following is a description of the symbols to be used in this paper:
\begin{itemize}
\item $\mathbb{R}$. The set of real number. The matrix with dimension $m\times n$ belonging to set $\mathbb{R}$ is denoted as $\mathbb{R}^{m\times n}$.
\item $x\sim D$. Sample from distribution $D$ yields $x$. 
\item $\mathbb{Z}_q^m$. The integer matrix ring with dimension $m$, where addition, subtraction and multiplication are performed modulo $q$.
\item $R=\mathbb{Z}[X]/(X^N+1)$. For a power-of-two integer $N$, the cyclotomic polynomial ring of dimension $N$.
\item $x\gets F$. Assigning the result of function $F$ to $x$.
\item $Enc(m)$. Encrypt plaintext $m$.
\item $Dec(ct)$. Decrypt ciphertext $ct$.
\item $Add(ct_1,ct_2)$. Return the sum of $ct_1,ct_2$ in ciphertext.
\item $CMult(ct;c)$. For ciphertext $ct_1=Enc[a_1,a_2,\cdots,a_n]$ and plaintext $c=[b_1,b_2,\cdots,b_n]$, return the ciphertext $ct_2=Enc[a_1*b_1,a_2*b_2,\cdots,a_n*b_n]$.
\item $Mult(ct_1;ct_2)$. For ciphertext $ct_1=Enc[a_1,a_2,\cdots,a_n]$ and $ct_2=Enc[b_1,b_2,\cdots,b_n]$, return the ciphertext $ct_3=Enc[a_1*b_1,a_2*b_2,\cdots,a_n*b_n]$.
\item $Rescale(ct;p)$. For a ciphertext $ct$ and an integer $p$, reduce the level of ciphertext while reducing noise.
\item $Rotate(ct;r)$. Return a ciphertext encrypting the rotated plaintext vector of $ct$ by $r$ positions.
\end{itemize}

\subsection{Transformer Model}
The Seq2Seq model\cite{S2Sequnce} is an important text generation model in natural language processing(NLP). This model is employed to handle variable-length input and output sequences. The Seq2Seq models have an encoder-decoder structure. The encoder is used to encode the information of the input sequence, which encodes the information contained in the input sequence of arbitrary length into an information vector. The decoder is used to decode the information vector and generate the output sequence.

In the Seq2Seq model based on the traditional Recurrent Neural Network (RNN), both the encoder and the decoder are generally RNNs with the same structure. The encoder's RNN processes the input sequences which compresses the input sequence information into a vector. The last state of the encoder contains the information of the entire input sequence and serves as the initial state for the decoder's RNN. Because the computation of this next layer requires the state of the previous layer, the RNN-based model cannot be parallelized. Besides, Cho et al.\cite{RNNlong} shows that when the input sequence is very long, the encoder will more or less forget some of the information in the input sequence, and thus the decoder is obviously not able to generate a correct result. This indicates that RNN-based model is not suitable for handling long sequences of sentences.

To address this problem, Vaswani et al.\cite{attention} proposed the transformer architecture in 2017. The transformer architecture is a neural network architecture consisting of self-attention layers and feed-forward network which is now widely used in natural language processing tasks, including LLMs.

The encoder of transformer contains two sub-layers, namely the multi-head attention mechanism and the feed-forward network, while the decoder contains the third sub-layers known as the masked multi-head attention. Multi-head attention mechanism is formed by combining multiple self-attention, followed by an Add \& Norm layer. In Add \& Norm layers, Add stands for Residual Connection to prevent network degradation, and Norm stands for Layer Normalization, which is used to normalize the activation values of each layer. 

\textbf{Self-attention. }Self-attention is a variant of the attention mechanism that reduces the dependence on external information and is good at capturing the dependencies between different positions in the sequence. The structure is shown in Fig.\ref{fig:selfattention} .

The input of self-attention is denoted as matrix $X$.
And $Q, K, V$ are obtained by linearly transforming the matrix $X$ as shown in Fig.\ref{fig:QKV}.  The trainable parameter matrices are $W^Q$, $W^K$, and $W^V$ and the input matrix $X$ is multiplied with the three matrix parameters respectively to obtain $Q$, $K$, and $V$. Among them, $Q$ denotes query vectors, $K$ denotes key vectors, and $V$ denotes value vectors. The attention matrix is then computed using $Q$, $K$ and $V$ with the following formula, where $M$ can be considered as the deviation matrix. $$Attention(Q,K,V) = softmax(Q·K^T\cdot M)·V$$

\begin{figure}[H]
\centering
\subfigure[Self-attention]{
\begin{minipage}[t]{0.3\textwidth}
\centering
\includegraphics[scale=0.5]{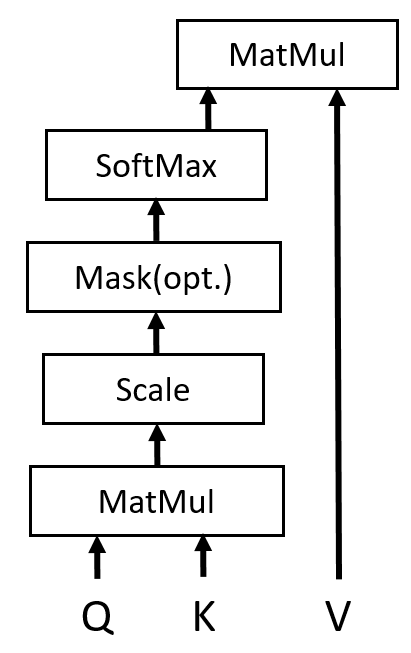}
\label{fig:selfattention}
\end{minipage}
}
\hfill
\subfigure[Calculation of $Q, K, V$]{
\begin{minipage}[t]{0.65\textwidth}
\centering
\includegraphics[scale=0.45]{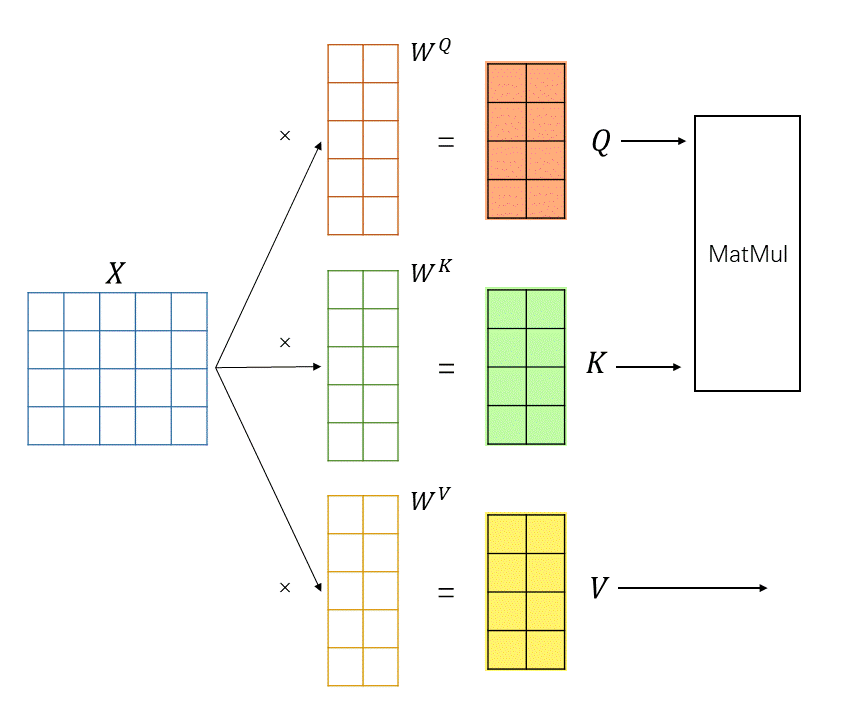}
\label{fig:QKV}
\end{minipage}
}
\centering
\caption{Transformer Architecture}
\label{fig：Transformer}
\end{figure}

The construction of LLMs usually superimposes many encoders and decoders. The multi-head attention mechanism in the encoder or the decoder contains multiple self-attention. So a large model like GPT-3 may contains billions of parameters which are mainly consist of parameter matrices in the self-attention module.

\subsection{LoRA(Low-Rank Adaptation)}

Fine-tuning, a central topic in AI model optimization, is one of the most common practices in transfer learning. When working on small models containing fewer than a million parameters, conducting full-parameter fine-tuning is manageable and does not present significant challenges. However, with the emergence of the GPT family of natural language processing models, the trend of increasing model size has become more and more apparent, and language models with billion parameter scales have become one of the hot topics in today's development. In this context, full-parameter fine-tuning operations are not only more memory intensive but also slower. Comparatively speaking, Parameter-Efficient Fine-Tuning(PEFT) methods are becoming greater importance.

LoRA is a PEFT technique released by Microsoft Research in 2021\cite{LoRA} for adapting large models to specific tasks and datasets, it is one of the most versatile and at the same time the most effective PEFT methods available. It significantly reduces the number of model parameters that need to be trained by applying a low-rank decomposition on the weight matrices of the LLM, thus reducing computational complexity and memory requirements. This approach allows efficient fine-tuning with limited resources while maintaining model performance. Compared to Adam's fine-tuned GPT-3 175B, LoRA can reduce the number of trainable parameters by a factor of 10,000 and reduce GPU memory requirements by a factor of 3\cite{LoRA}.

Many previous work has shown that over-parameterized large models have lower intrinsic dimensionality. The main idea behind LoRA is that changes in weights during model fine-tuning also have lower intrinsic dimensionality. Specifically, if $W_{d\times d}$ represents the weights of a single layer and $\Delta W_{d\times d}$ represents the change in weights during model adaptation, LoRA means that $\Delta W_{d\times d}$ is a low-rank matrix, i.e:$$rank(\Delta W_{d\times d})\ll d$$

\textbf{Low-rank decomposition. }Given $W\in \mathbb{R}^{d\times d}$ as the pre-trained weight, and $\Delta W\in \mathbb{R}^{d\times d}$ as the finetune incremental weight. Denote the input as $x$ and the output as $h$, then we have: $h=Wx+\Delta Wx$. In LoRA, we approximate $\Delta W$ with matrices A and B.

\begin{itemize}
\item $A\in \mathbb{R}^{d\times r}$: The low-rank matrix A, where r is the ``rank", uses a random Gaussian initialization.
\item $B\in \mathbb{R}^{r\times d}$: The low-rank matrix B  initializes to a zero matrix.
\end{itemize}

After the above splitting, $\Delta W$ is rewritten in the form of $\Delta W=AB$, which makes the size of fine-tuning parameter matrix reduced from $d*d$ to $2*r*d\,(r\ll d)$ without changing the dimension of the output data, i.e., $$h=Wx+ABx.$$

In addition, for the two low-rank matrices, an adjustment will be done with the hyperparameter $\alpha$ (a constant), which is used as the scaling rate to multiply directly with the low-rank matrices, i.e., the final output is: $$h=Wx+\frac{\alpha}{r}ABx.$$

\textbf{Applied to Transformer.}
LoRA can be applied to any subset of the weight matrices in a neural network to reduce the number of trainable parameters. In Transformer, there are four weight matrices $(W^Q, W^K, W^V, W^O)$ in the self-attention and two in the Multilayer Perceptron(MLP) module. For simplicity and parameter efficiency, we freeze the MLP module and focus only on the self-attention part. We treat $W^Q$(or $W^K$, $W^V$) as a single matrix of dimension $d\times d$, add bypass matrices to it and adjust the weights according to the downstream tasks, as shown in Fig.\ref{fig:LoRA}(Take $W^Q$ for example).

\begin{figure}
\centering
\includegraphics[scale=0.6]{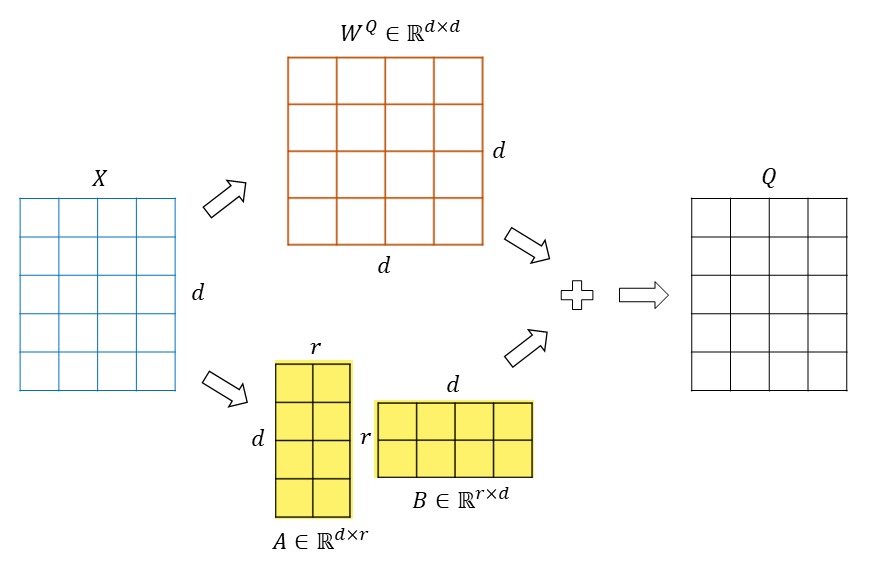}
\caption{LoRA applied to transformer with $d = 4, r = 2$}
\label{fig:LoRA}
\end{figure}

\subsection{Homomorphic Encryption}

Homomorphic encryption is a cryptographic technique based on the computational complexity theory of mathematical problems. The idea was first proposed by R. Rivest et al.\cite{1978R} in 1978. The term ``homomorphic" means that after homomorphic encryption, the result of a computation on the ciphertext is the same as the result of a direct computation on the plaintext after decryption. Let $F$ be a function that does computation on $x$ and $y$, $E$ be the encryption function and $D$ be the decryption function, then we have:$$D(F(E(x),E(y)))=F(x,y).$$

Homomorphic encryption methods are divided into three categories: Partially Homomorphic Encryption(PHE), Somewhat Homomorphic Encryption(SHE) and Fully Homomorphic Encryption(FHE). PHE supports partial forms of computation on the ciphertext, such as addition only or multiplication only. Those that only support addition are called additive homomorphic encryption algorithms, and those that only support multiplication are called multiplicative homomorphic encryption algorithms. SHE only supports a limited number of additive and multiplicative operations on the ciphertext (if the number of operations is too many, then the ciphertext can not be correctly decrypted because of too much noise). FHE supports arbitrary computation on the ciphertext, including both addition and multiplication operations. According to the logical completeness of the algorithm, a homomorphic algorithm that supports both ciphertext multiplication and ciphertext addition can support arbitrary computations on the ciphertext. 

In 2009, Gentry\cite{Gentry} proposed a FHE scheme using ideal lattices, which shows the first theoretically feasible blueprint of FHE. Since then, three generations of technical systems have been developed for FHE.

The CKKS FHE algorithm, proposed by Cheon et al.\cite{CKKS} in 2017, is a second-generation algorithm that is able to support both ciphertext addition and ciphertext multiplication, which makes it possible to apply high-precision approximations to arbitrary functions when calculating the gradient using the CKKS algorithm. In addition, the CKKS algorithm employs a rescaling technique to keep the message sizes before and after encoding basically unchanged during the homomorphic computation. This method ensures that the maximum ciphertext modulus required by the scheme grows linearly with the depth of the arithmetic circuits, and greatly improves the efficiency of the scheme. The CKKS algorithm is currently the most suitable algorithm for numerical computation, and can be implemented using the Microsoft SEAL library\cite{sealcrypto} developed by Microsoft's Cryptography and Privacy Research Group.

\subsection{Learning with Errors }

Learning with Errors (LWE) problem was introduced by Regev in 2005\cite{LWE} and has been widely used in the construction of cryptographic schemes. LWE is proven to be hard since  there is a reduction from LWE to lattice-based hard problems. There are two versions of LWE problems: search-version and decisional version,  which are defined as follows.

\begin{definition}[Search-version LWE(LWE)]
Given $m$ samples following the LWE distribution, that is $\{(A\in\mathbb{Z}_q^{m\times n}, b=As+e\,\mathrm{mod}\,q)\}$, where $A\sim U(\mathbb{Z}_q^{m\times n})$ is randomly selected and $s,e\sim \chi$. The goal of search-LWE is to find $s$.
\end{definition}

\begin{definition}[Decisional LWE(DLWE)]
Given $m$ samples, distinguish whether they follow the LWE distribution $\{(A\in\mathbb{Z}_q^{m\times n}, b=As+e\,\mathrm{mod}\,q)\}$ or uniformly distribution  $\{(A\in\mathbb{Z}_q^{m\times n}, b\in U(\mathbb{Z}_q^n)\}$.
\end{definition}
Many variants have been explored since the introduction of LWE. The Continuous LWE(CLWE) problem was proposed by Bruna et al. in 2020\cite{CLWE} and can be seen as a continuous variant of the LWE problem. Bruna et al. also prove that CLWE can be reduced to LWE under certain conditions. 
\begin{definition}[CLWE]
Let $n$ be an integer, and  $\gamma\geq2\sqrt{n}$, $\beta\in(0,1)$. Given $m$ sample $\{(A\in R^{m\times n}, b=\gamma As+e\,\mathrm{mod}\,1 \in R^m)\}$, where the secret vector  $s\in R^n$ is of length 1,
the elements in $A$ follow the standard normal distribution, i.e. $N(0,1)$ and $e$ follows $N(0,\beta)$. The CLWE problem is to find the secret vector $s$.
\end{definition}

\section{Practical Secure Inference Algorithm for Fine-tuned Large Language Model Based on FHE}

\subsection{Open-LLM + Private-LoRA}

Depending on whether private data is used in training, we split the weights involved in the inference process of the LLMs in the form of modules and propose an ``Open-LLM + Private-LoRA" structure. `Open-LLM' means the open-sourced pre-trained LLM, which provides the basic semantic comprehension ability, and the relevant model parameters are all publicly available. `Private-LoRA' is the rank-decomposition bypass matrix trained by private data, added to the base model to achieve the fine-tuning effects, such as extra knowledge in specialized areas.

By decomposing this structure, it can be seen that the `Open-LLM' does not involve private data and can be given to the user, who can locally compute it in plaintext. The `Private-LoRA' is trained using private data, the parameters of the rank-decomposition matrix need to be kept confidential, so it is saved on the server where it is trained. Thus users need to interact with the server to complete the inference. To protect users' input from leakage, users will encrypt their input before interacting, and the calculations performed at the server side are all in ciphertext form. According to the above method, the complete model inference process can be decomposed into two parts, one of which is computed using the plaintext only and the other is computation involves with ciphertext. So as to minimize the number of weights involved in the ciphertext computation to improve the efficiency of the  inference algorithm. The specific structural decomposition and encryption part are schematically shown in Fig.\ref{construct_fig}.

\begin{figure}
\centering
\includegraphics[scale=0.6]{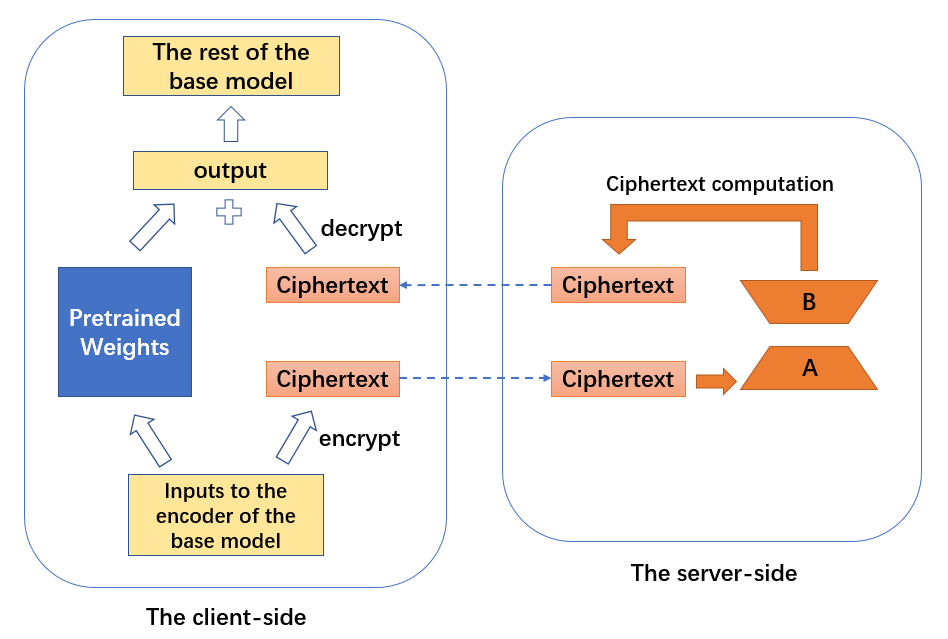}
\caption{data transmission between client and server in the self-attention of Transformer}
\label{construct_fig}
\end{figure}

The initialize process is shown in \ref{construction}. $M$ represents the base LLM, $FineTune(M)$ denotes the LoRA fine-tuning of the model $M$, and $L$ denotes the low-rank matrix part obtained by the fine-tuning.

\begin{figure}[ht]
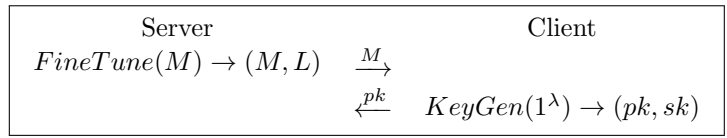
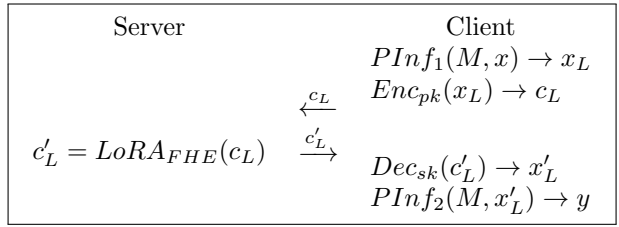

    \centering
    \subfigure[Initialize Process]{
    \begin{minipage}[b]{1\textwidth} 
        $${\boxed{
        \begin{tabular}{ccc}
        Server&&Client\\
        \makecell[l]{
        $FineTune(M)\rightarrow(M, L)$\\
        \vspace*{5pt}
        }
        &
        \makecell{
        $ \xrightarrow{M}$\\
        $\xleftarrow{pk}$
        }
        &
        \makecell[l]{
        \vspace*{5pt}\\
        $KeyGen(1^\lambda)\rightarrow(pk, sk)$
        }
        \end{tabular}
        }}$$
        \label{construction}
    \end{minipage}
    }
    \qquad 
    \subfigure[Inference Process]{
    \begin{minipage}[b]{0.45\textwidth} 
        $${\boxed{
        \begin{tabular}{ccc}
            Server&&Client\\
            \makecell[l]{
            \vspace*{5pt}\\
            \vspace*{5pt}\\
            $c_L'=LoRA_{FHE}(c_L)$ \\
            \vspace*{5pt}\\
            }
            &
            \makecell{
            \vspace*{5pt}\\
            $\xleftarrow{c_L}$\\
            $\xrightarrow{c_L'}$\\
            \vspace*{5pt}
            }
            &
            \makecell[l]{
            $PInf_1(M, x)\rightarrow x_L$\\
            $Enc_{pk}(x_L)\rightarrow c_L$\\
            \vspace*{5pt}\\
            $Dec_{sk}(c_L')\rightarrow x_L'$\\
            $PInf_2(M, x_L')\rightarrow y$
            }
            \end{tabular}
            }}$$
            \label{Ouralgorithm}
    \end{minipage}
    }
\caption{``Open-LLM + Private-LoRA” Structure}
\end{figure}

During the inference process, when a calculation  needs to go through the fine-tuning part, the specific interaction flow is shown in \ref{Ouralgorithm}. The user input data is denoted by $x$, $PInf_1(M, x)$ and $PInf_2(M, x)$ denote the computation of the base model $M$ on $x$ before and after the fine-tuning part, respectively. $LoRA_{FHE}()$ denotes the computation of $c_L$ under FHE.

\subsection{Private Linear Layer}

In this section, we propose a general method for transforming any linear layer into another one that enhances security against model extraction attacks while preserving its functionality. Firstly, we outline the problem.

\textbf{Attack for the plain linear layer parameter.} The weights of a linear layer can be viewed as a  matrix $A \in \mathbb{R}^{m\times n}$, then for an input $X \in \mathbb{R}^{d\times m}$, it outputs $Y=XA\in \mathbb{R}^{d\times n}$. Consider an adversary model that attempts to obtain the weights of linear layer through multiple queries, without loss of generality, we discuss the case where $d = 1$ for simplicity.
Assume that the adversary could choose any input and get the corresponding result, then just by inputting $x_1=(1,0,\ldots,0),\ldots,x_n=(0,0,\ldots,1)$,  we have an attack of complexity $O(n)$ that getting all parameters of $A$. Therefore it is easily seen that a plain linear layer is naturally weak against model extraction attack. 

\textbf{Our proposed method.} To avoid this problem, our method can be divided into two steps, in the first step, one should replace the structure of the linear layer by the following equation:  

$$y = 
\begin{pmatrix}
x & {x'}
\end{pmatrix}\begin{pmatrix}
{A} \\
{E'}
\end{pmatrix} + sA\quad \mathrm{mod}\quad  q,$$
where $x'\in \mathbb{R}^{d\times m'}, E' \in \mathbb{R}^{m'\times n}, s\in \mathbb{R}^{d\times m}, q \in   \mathbb{R} $.

More specially, $x'$ is fulfilled with $1$ in every entries,  $s$ consists of $d$ vectors which are separately sampled from a uniform distribution over all  vectors with the length $\gamma$ in $\mathbb{R}^{ m}$ and $q>0$ is a fixed real number parameter.

Then we need to train this  network to make it convergent by choosing  $\gamma, q>0$ and initializing $E'$ with a small Gaussian distribution for each entry and keep $x', s, q$ fixed through the training process. 

Once the training is complete, the second step corresponds to the inference process. With $x$ as the input, the output $y$ is computed as: 

$$y = xA + x'(E' \odot P)  + sA + kq,$$ where $k \in \mathbb{Z}^{d\times n}$ is a random matrix whose entries are separately sampled from a uniform distribution in $[-\lfloor q\rceil, \lfloor q \rceil ]$,  $P \in \{0,1\}^{m' \times n}$ is a random matrix whose entries are separately sampled from a Bernoulli distribution with the success probability $p$ and $\odot$ represents scalar multiplication of matrix.

Since $x'$ are fulfilled with $1$ and $P$ is a random matrix whose entries are separately sampled from a Bernoulli distribution, we have $E = x'(E' \odot P ) \in \mathbb{R}^{d\times n}$ and let $e_{i,j}$ denote the $i$-th  row and $j$-th column of $E$ and $e'_{i,j}$ denote the entry at $i$-th row and $j$-th column of $E'$, then:

$$e_{i,j} = \sum_{i=1}^mp_{i,j}e_{i,j}',$$
where $p_{i,j}$ is sampled from a Bernoulli distribution and $e_{i,j}$ is a determined weights, so $p_{i,j}e_{i,j}$ are independent bounded variables and their sum $e_{i,j}$ can be viewed as a Gaussian distribution according to the Hoeffding's inequality. Besides, introducing random variables by $P$ will not bring significant impact compared with the original matrix in neural network because this corresponds to a traditional method called dropout. Dropout is a regularization technique used in neural networks to prevent overfitting\cite{Dropout}. During a training process, dropout method randomly sets a fraction of the neurons to zero at each iteration with fixed probability.

As  $k \in \mathbb{Z}^{d\times n}$ is a random matrix whose entries are separately sampled from a uniform distribution in $[-\lfloor q\rceil, \lfloor q \rceil ]$,  when receiving $y$, a determined result can be obtained by:

$$y = xA + E + sA \, \mathrm{mod} \, q. $$

Now let us discuss its security against model extraction attack, and the problem can be defined as  follows. To simplify the description, we will take $d = 1$ without losing generality:

\begin{definition}[$SolveMatrix(t,m,n, q, \gamma,\beta)$]
Given $t=poly(m,n)$ samples $\{(x_i\in\mathbb{R}^m, b_i=(x_i+s)A+e_i\,\mathrm{mod}\,q)\}_{i=1,2,\cdots,t}$, where $s\sim \chi^{'}(\gamma)$ as a secret vector of length $\gamma$, $e_i\sim \chi(\beta)$ as a  vector of length $\beta$. $A$ is a random matrix  with each element following a normal distribution with a standard deviation of $1$ and $x_i$ is a random vector. Solve to obtain $A$.
\end{definition}

\begin{theorem}
\label{the:LWE}
For parameters  $\gamma/q\ge2\sqrt{m}$, if there exists an adversary that can solve the $SolveMatrix(t,m,n,$ $q,\gamma,\beta)$ problem, then it can distinguish the LWE problem, i.e., solve $CLWE(m,n,\gamma/q ,\beta/q)$.
\end{theorem}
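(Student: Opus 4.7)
The plan is to reduce $CLWE(m, n, \gamma/q, \beta/q)$ to $SolveMatrix(t, m, n, q, \gamma, \beta)$: given a CLWE instance, build $SolveMatrix$ samples such that the oracle's output reveals the CLWE secret. Concretely, starting from a CLWE instance with secret $s^* \in \mathbb{R}^m$ satisfying $\|s^*\|=1$ and samples $(a_i, c_i)$ with $c_i = (\gamma/q)\langle a_i, s^* \rangle + e_i \,\mathrm{mod}\,1$, the idea is to construct a $SolveMatrix$ instance whose unknown matrix $A$ has first column $A^{(1)} := \sqrt{m}\, s^*$; since $s^*$ can be viewed as a normalised standard Gaussian direction, the entries of $A^{(1)}$ are distributionally $N(0,1)$, matching the distribution on $A$ assumed by $SolveMatrix$. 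The remaining columns of $A$ and the shift $s$ of length $\gamma$ are both sampled by the reduction itself, and are therefore known.

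To produce each $SolveMatrix$ sample $(x_j, b_j)$, I would take a CLWE sample $(a_j, c_j)$ and set $x_j$ so that $x_j + s$ is a specific scalar multiple of $a_j$, the scaling chosen so that $\langle x_j + s,\, A^{(1)} \rangle$ exactly reproduces the left-hand side of the CLWE relation. The first coordinate $b_j^{(1)}$ is then derived from $c_j$ by scaling, lifting the mod-$1$ relation to a mod-$q$ relation, and adding a fresh Gaussian noise term; the remaining coordinates $b_j^{(k)}$ for $k \geq 2$ are computed directly using our knowledge of $A^{(k)}$, $s$, and fresh noise. Feeding the resulting $t$ samples to the $SolveMatrix$ oracle returns $A$, from which $s^* = A^{(1)}/\sqrt{m}$ is read off.

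The main obstacle is the noise-and-wraparound analysis. CLWE holds modulo $1$ while $SolveMatrix$ holds modulo $q$, so lifting the CLWE equation to modulus $q$ introduces a per-sample integer wraparound contribution alongside a scaled CLWE noise, and the sum of these contributions must be statistically close to the $\beta$-length noise expected by $SolveMatrix$, once a fresh smoothing Gaussian is added. This is precisely where the hypothesis $\gamma/q \geq 2\sqrt{m}$ plays its role: it controls the smoothing parameter of the relevant lattice and ensures that the reduction's samples are indistinguishable from genuine $SolveMatrix$ samples. The argument closely mirrors the CLWE-to-LWE reduction of Bruna et al., and I would close this step using Gaussian tail bounds together with a Banaszczyk-style smoothing lemma, modelled on their proof.
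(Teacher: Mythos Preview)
Your proposal takes a genuinely different route from the paper's, and a more complicated one. The key structural observation you overlook is that in a CLWE instance the matrix $A$ is \emph{public}: it is handed to you as part of the sample. The paper therefore reuses this very $A$ as the hidden matrix of the $SolveMatrix$ instance, and maps the (scaled) CLWE secret $\gamma s$ to the $SolveMatrix$ shift $s'$. For each freshly chosen $x_i$ the reduction simply sets $b_i' = q\,b_i + x_i A \bmod q$; since $b_i = (\gamma/q)\,sA + e_i \bmod 1$, this gives $b_i' = (x_i + \gamma s)A + q e_i \bmod q$, a valid $SolveMatrix$ sample with shift of length $\gamma$ and noise of scale $\beta$. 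Because the reduction already \emph{knows} $A$, it just checks whether the oracle's output equals $A$: if the CLWE samples were genuine it will, and if they were uniform the oracle cannot recover this specific Gaussian $A$ except with negligible probability. This is a decision reduction, and no smoothing lemma is invoked anywhere.

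Two consequences for your write-up. First, your worry about wraparound is misplaced: multiplying a mod-$1$ relation by $q$ and then reducing mod $q$ absorbs the integer part exactly, so no Banaszczyk-style argument is needed in either approach. Second, the hypothesis $\gamma/q \ge 2\sqrt{m}$ is not used inside the reduction at all; it is simply the parameter regime in which the target CLWE instance is known (via Bruna et al.) to be hard, so that the reduction lands on a meaningful problem. Your idea of embedding the CLWE secret as one column of $A$ and sampling the shift yourself could perhaps be pushed through --- it would yield a search rather than a decision reduction --- but it forces avoidable distributional bookkeeping (matching the law of $A^{(1)}$ exactly, constraining $x_j+s$ to be collinear with $a_j$) and misidentifies where the actual content of the argument lies.
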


\begin{proof} 
In order to prove the theorem, we need to convert instances of the CLWE problem into instances of the $SolveMatrix(t,m,n,q, \gamma,\beta)$ problem. For the $t=poly(m,n)$ samples input of CLWE problem,  algorithm \ref{SecuritProof-CLWE} will output corresponding $SolveMatrix$ samples. Then we pass the returned samples $(x_{i}, b'_{i})$ to the adversary  to get the corresponding matrix.

\begin{figure}
\centering
\includegraphics[scale=0.5]{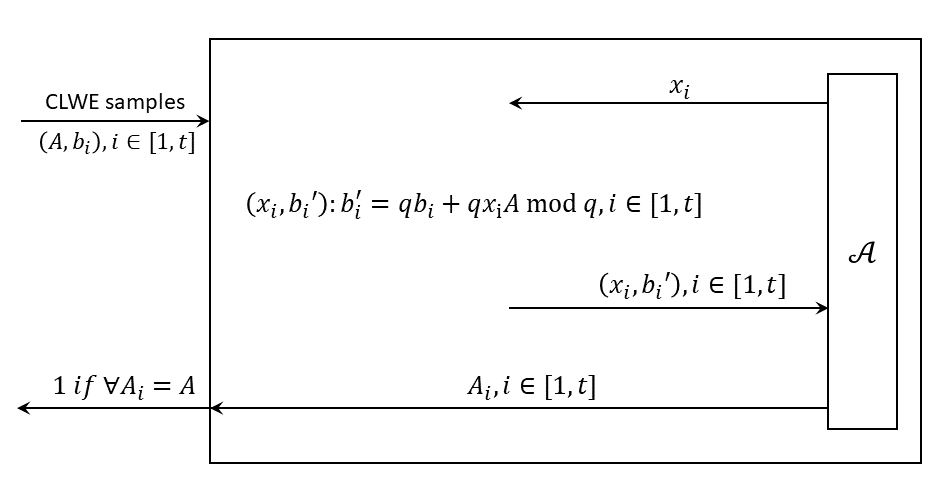}
\caption{Security Reduction Process}
\label{CLWE}
\end{figure}

\begin{algorithm}[!ht]
    \caption{Convert instances of  CLWE problem into instances of  $SolveMatrix(t=poly(m,n),m,n,q,\gamma,\beta)$ problem.}
    \label{SecuritProof-CLWE}
    \SetAlgoLined
    \SetKwInOut{Input}{Input}
    \SetKwInOut{Output}{Output}

    \Input {$t=poly(m,n)$ samples of the CLWE problem inputs$\{(A\in\mathbb{R}_1^{m\times n}, b\in\mathbb{R}_1^n\}$}
    \Output {$t$ samples $\{x_{i},b'_{i}\}$}
 
    \For{$i \leftarrow 1$ \KwTo$ t $}{
        \textbf{The Adversary:} Randomly pick $x_{i}\in \mathbb{R}^m$
        \\Compute $b'_{i}=qb_i+qx_{i}A\,\mathrm{mod}\,q$
    }
    \Return{$(x_{i},b'_{i})_{i=1,\cdots,t}$}  
\end{algorithm}

Depending on the output, we consider two cases:
\begin{itemize}
    \item If the input to the CLWE problem is a sample that follows the CLWE distribution, i.e., $b_i=\gamma/q  \cdot sA+e_i\,\mathrm{mod}\,1$. It can be inferred that for sample $\{x_{i},b'_{i}\}$, there is $b'_{i}=s^{'}A+x_{i}(qA)+qe_i\,\mathrm{mod}\,q$  where $s^{'}=\gamma s$. At this time, $(x_{i},b'_{i})$ is a sample input of the problem $SolveMatrix(t,m,n,q,\gamma,\beta)$. Given that adversary   has the ability to solve the $SolveMatrix$, the adversary can solve the problem  and return   $  A $.
    \item If the input samples to the CLWE problem are samples following a uniform distribution, then in the view of the adversary, the obtained sample $b_{i}$ follows uniform distribution, which does not meet the definition of SolveMatrix problem. Since $A$ follows Gaussian distribution with a standard deviation of $1$. Therefore, the probability of the adversary returning the specific   $ A  $ is negligible.
\end{itemize}

By determining whether the return value of the adversary is the input matrix $ A  $, we can distinguish whether a sample follows CLWE distribution or uniform distribution. This indicates an algorithm that  can solve  CLWE problem. This completes the proof of the theorem \ref{the:LWE}.
\end{proof}
It should be noted that this theorem demands the parameters satisfy   $\gamma/q\ge2\sqrt{m}$ to complete the proof. Besides, to ensure the  updated private linear layer preserves its original functionality, we need to choose  $\gamma$ and $q$  to make the network convergent during the retraining process. As the  convergent of a neural network highly depends on the specific training task, it is not sure that we can find such $\gamma$ and $q$ for every specific task. However, we can still finish the proof by reducing the problem to a variant of CLWE where $\gamma \ge \alpha \sqrt{n}$ for some $\alpha \in(0,2)$, whose  hardness still needs further studies.

\subsection{Implementation of PLL based on   FHE}

By utilizing PLL technology, we can protect the weights on the server side from leakage. To further achieve the security of users' input, we use leveled fully homomorphic algorithm in the computation of PLL, e.g. the inference process involving LoRA rank-decomposition matrix operations. Since the packed encoding technique in the hierarchical fully homomorphic algorithm requires each operation to be performed on every element in the ciphertext vector, it is necessary to transform the matrix-based plaintext computation process in LoRA reasoning into a vector-based ciphertext inference process. According to the formula $y = xA + x'(E' \odot P)  + sA + kq,$ where $x\in \mathbb{R}^{d\times m}, A\in \mathbb{R}^{m\times n}, q\in\mathbb{R} $. The process is as follows:

Given the input matrix $x$. According to the structure of LoRA, $A=A_1A_2$, where $A_1\in \mathbb{R}^{m\times r}, A_2\in \mathbb{R}^{r\times n}$. the low-rank matrices $A_1$ and $A_2$ are described as follows:

$x~ = ~\begin{bmatrix}
x_{11} & x_{12} & \cdots & x_{1m} \\
x_{21} & x_{22} & \cdots & x_{2m} \\
 \vdots & \vdots & \ddots & \vdots \\
x_{d1} & x_{d2} & \cdots & x_{dm}
\end{bmatrix}, $
$A_1~ = ~\begin{bmatrix}
a_{11} & a_{12} & \cdots & a_{1r} \\
a_{21} & a_{22} & \cdots & a_{2r} \\
 \vdots & \vdots & \ddots & \vdots \\
a_{m1} & a_{m2} & \cdots & a_{mr}
\end{bmatrix}, $
$A_2~ = ~\begin{bmatrix}
b_{11} & b_{12} & \cdots & b_{1n} \\
b_{21} & b_{22} & \cdots & b_{2n} \\
 \vdots & \vdots & \ddots & \vdots \\
b_{r1} & b_{r2} & \cdots & b_{rn}
\end{bmatrix}.$

According to Section 3.2, $x'\in \mathbb{R}^{d\times m'}$,$E'\in \mathbb{R}^{m'\times n}$,$s\in\mathbb{R}^{d\times m}$ are fixed matrices, $P\in\{0,1\}^{m'\times n},k\in\mathbb{Z}^{d\times n}$ are random matrices. Since the $Q = x'(E' \odot P)  + sA + kq$ part is calculated in plaintext on the server, in this section, we assume that the randomly obtained matrix $Q$ in round $t$ is a plaintext matrix $Q_t = x'(E' \odot P_t)  + sA + k_tq $ as 
$Q_t~ = ~\begin{bmatrix}
q(t)_{11} & q(t)_{12} & \cdots & q(t)_{1n} \\
q(t)_{21} & q(t)_{22} & \cdots & q(t)_{2n} \\
 \vdots & \vdots & \ddots & \vdots \\
q(t)_{d1} & q(t)_{d2} & \cdots & q(t)_{dn}
\end{bmatrix}.$

The scheme's steps are as follows:

\textbf{Step 1:} The client encrypts the input matrix $x$ and transmits the encrypted matrix $ct_x$ to the server:
$${ct}_{x}~ = ~Enc\begin{bmatrix}
x_{11} & x_{12} & \cdots & x_{1m} \\
x_{21} & x_{22} & \cdots & x_{2m} \\
 \vdots & \vdots & \ddots & \vdots \\
x_{d1} & x_{d2} & \cdots & x_{dm}
\end{bmatrix}.$$

\textbf{Step 2:} The server receives the ciphertext input matrix $ct_x$, does the ciphertext-plaintext multiplication operation with the low-rank matrix $A_1$, and rescales the resulting result in $p$ bits:
$${ct}_{x1}~\leftarrow~ReScale\left( {Mult\left( {A_1;{ct}_{x}} \right);p} \right).$$

The following shift-and-sum operation is performed on $ct_{x1}$ to obtain $ct_{x3}$, where $j=0,1,...,log(m+1)-1$ and * denotes an irrelevant parameter:
$$\left. {ct}_{x3}~\leftarrow~Add\left( {{ct}_{x2},~Rotate\left( {{ct}_{x2};2^{j}} \right)} \right) \right.,$$
$${ct}_{x3}~ = Enc\begin{bmatrix}
{\sum\limits_{k = 1}^{m}{x_{1,k}\cdot a_{k,1}}} & \mathbf{*} & \cdots & \mathbf{*} \\
{\sum\limits_{k = 1}^{m}{x_{1,k}\cdot a_{k,2}}} & \mathbf{*} & \cdots & \mathbf{*} \\
 \vdots & \vdots & \ddots & \vdots \\
{\sum\limits_{k = 1}^{m}{x_{1,k}\cdot a_{k,r}}} & \mathbf{*} & \cdots & \mathbf{*} \\
 \vdots & \vdots & \ddots & \vdots \\
{\sum\limits_{k = 1}^{m}{x_{d,k}\cdot a_{k,1}}} & \mathbf{*} & \cdots & \mathbf{*} \\
{\sum\limits_{k = 1}^{m}{x_{d,k}\cdot a_{k,2}}} & \mathbf{*} & \cdots & \mathbf{*} \\
 \vdots & \vdots & \ddots & \vdots \\
{\sum\limits_{k = 1}^{m}{x_{d,k}\cdot a_{k,r}}} & \mathbf{*} & \cdots & \mathbf{*}
\end{bmatrix}.$$

\textbf{Step 3:} Divide the irrelevant parameters and set the plaintxt matrix D:
$$D = \begin{bmatrix}
1 & \cdots & 0 \\
1 & \cdots & 0 \\
 \vdots & \ddots & \vdots \\
1 & \cdots & 0
\end{bmatrix}.$$

Do ciphertext-plaintext multiplication of $ct_{x3}$ and D to get $ct_{x4}$:
$${ct}_{x4} ~\leftarrow~ ReScale\left( {CMult\left( {D;{ct}_{x3}} \right);p} \right), $$
$${ct}_{x4}~ = Enc\begin{bmatrix}
{\sum\limits_{k = 1}^{m}{x_{1,k}\cdot a_{k,1}}} & 0 & \cdots & 0 \\
{\sum\limits_{k = 1}^{m}{x_{1,k}\cdot a_{k,2}}} & 0 & \cdots & 0 \\
 \vdots & \vdots & \ddots & \vdots \\
{\sum\limits_{k = 1}^{m}{x_{1,k}\cdot a_{k,r}}} & 0 & \cdots & 0 \\
 \vdots & \vdots & \ddots & \vdots \\
{\sum\limits_{k = 1}^{m}{x_{d,k}\cdot a_{k,1}}} & 0 & \cdots & 0 \\
{\sum\limits_{k = 1}^{m}{x_{d,k}\cdot a_{k,2}}} & 0 & \cdots & 0 \\
 \vdots & \vdots & \ddots & \vdots \\
{\sum\limits_{k = 1}^{m}{x_{d,k}\cdot a_{k,r}}} & 0 & \cdots & 0
\end{bmatrix}.$$

Replicate the inner product values to other columns to obtain $ct_{x5}$, where $j=0,1,\dots,log(m+1)-1$:

$$\left. {ct}_{x5}~\leftarrow~Add\left( {{ct}_{x4},~Rotate\left( {{ct}_{x4};-2^{j}} \right)} \right) \right.,$$

$${ct}_{x5}~ = Enc\begin{bmatrix}
{\sum\limits_{k = 1}^{m}{x_{1,k}\cdot a_{k,1}}} & {\sum\limits_{k = 1}^{m}{x_{1,k}\cdot a_{k,1}}} & \cdots & {\sum\limits_{k = 1}^{m}{x_{1,k}\cdot a_{k,1}}} \\
{\sum\limits_{k = 1}^{m}{x_{1,k}\cdot a_{k,2}}} & {\sum\limits_{k = 1}^{m}{x_{1,k}\cdot a_{k,2}}} & \cdots & {\sum\limits_{k = 1}^{m}{x_{1,k}\cdot a_{k,2}}} \\
 \vdots & \vdots & \ddots & \vdots \\
{\sum\limits_{k = 1}^{m}{x_{1,k}\cdot a_{k,r}}} & {\sum\limits_{k = 1}^{m}{x_{1,k}\cdot a_{k,r}}} & \cdots & {\sum\limits_{k = 1}^{m}{x_{1,k}\cdot a_{k,r}}} \\
 \vdots & \vdots & \ddots & \vdots \\
{\sum\limits_{k = 1}^{m}{x_{d,k}\cdot a_{k,1}}} & {\sum\limits_{k = 1}^{m}{x_{d,k}\cdot a_{k,1}}} & \cdots & {\sum\limits_{k = 1}^{m}{x_{d,k}\cdot a_{k,1}}} \\
{\sum\limits_{k = 1}^{m}{x_{d,k}\cdot a_{k,2}}} & {\sum\limits_{k = 1}^{m}{x_{d,k}\cdot a_{k,2}}} & \cdots & {\sum\limits_{k = 1}^{m}{x_{d,k}\cdot a_{k,2}}} \\
 \vdots & \vdots & \ddots & \vdots \\
{\sum\limits_{k = 1}^{m}{x_{d,k}\cdot a_{k,r}}} & {\sum\limits_{k = 1}^{m}{x_{d,k}\cdot a_{k,r}}} & \cdots & {\sum\limits_{k = 1}^{m}{x_{d,k}\cdot a_{k,r}}}
\end{bmatrix}.$$

Perform a multiplication operation on the ciphertexts $ct_{x5}$ and $A_2$ and rescale the resulting result by $p$ bits:
$${ct}_{x6}~\leftarrow~ReScale\left( {Mult\left( {A_2;{ct}_{x5}} \right);p} \right).$$

Sum $ct_{x6}$ every r lines to get the $ct_{x7}$, where $k=0,1,\dots,n-1$:
$$\left. i = \lbrack kr+1, kr + r\rbrack, {ct}_{x7}[k]~\leftarrow~Add\left( {{ct}_{x6}[i+1],~{ct}_{x6}[i]} \right) \right.,$$
$${ct}_{x7}~ = ~Enc\begin{bmatrix}
{\sum\limits_{i = 1}^{r}\left( {\sum\limits_{k = 1}^{m}{x_{1,k}\cdot a_{k,i} \cdot b_{i,1}}} \right)} & {\sum\limits_{i = 1}^{r}\left( {\sum\limits_{k = 1}^{m}{x_{1,k}\cdot a_{k,i} \cdot b_{i,2}}} \right)} & \cdots & {\sum\limits_{i = 1}^{r}\left( {\sum\limits_{k = 1}^{m}{x_{1,k}\cdot a_{k,i} \cdot b_{i,n}}} \right)} \\
{\sum\limits_{i = 1}^{r}\left( {\sum\limits_{k = 1}^{m}{x_{2,k}\cdot a_{k,i} \cdot b_{i,1}}} \right)} & {\sum\limits_{i = 1}^{r}\left( {\sum\limits_{k = 1}^{m}{x_{2,k}\cdot a_{k,i} \cdot b_{i,2}}} \right)} & \cdots & {\sum\limits_{i = 1}^{r}\left( {\sum\limits_{k = 1}^{m}{x_{2,k}\cdot a_{k,i} \cdot b_{i,n}}} \right)} \\
 \vdots & \vdots & \ddots & \vdots \\
{\sum\limits_{i = 1}^{r}\left( {\sum\limits_{k = 1}^{m}{x_{d,k}\cdot a_{k,i} \cdot b_{i,1}}} \right)} & {\sum\limits_{i = 1}^{r}\left( {\sum\limits_{k = 1}^{m}{x_{d,k}\cdot a_{k,i} \cdot b_{i,2}}} \right)} & \cdots & {\sum\limits_{i = 1}^{r}\left( {\sum\limits_{k = 1}^{m}{x_{d,k}\cdot a_{k,i} \cdot b_{i,n}}} \right)}
\end{bmatrix}.$$

\textbf{Step 4:} Add other plaintext entries, which is sent to the client:

$${ct}_{x8}~\leftarrow~Add\left({ct}_{x7},Q_t \right),$$
$${ct}_{x7}~ = ~Enc\begin{bmatrix}
{\sum\limits_{i = 1}^{r}\left( {\sum\limits_{k = 1}^{m}{x_{1,k}\cdot a_{k,i} \cdot b_{i,1}}} \right)}+q(t)_{11} & \cdots & {\sum\limits_{i = 1}^{r}\left( {\sum\limits_{k = 1}^{m}{x_{1,k}\cdot a_{k,i} \cdot b_{i,n}}} \right)}+q(t)_{1n} \\
{\sum\limits_{i = 1}^{r}\left( {\sum\limits_{k = 1}^{m}{x_{2,k}\cdot a_{k,i} \cdot b_{i,1}}} \right)}+q(t)_{21} & \cdots & {\sum\limits_{i = 1}^{r}\left( {\sum\limits_{k = 1}^{m}{x_{2,k}\cdot a_{k,i} \cdot b_{i,n}}} \right)}+q(t)_{2n} \\
 \vdots & \ddots & \vdots \\
{\sum\limits_{i = 1}^{r}\left( {\sum\limits_{k = 1}^{m}{x_{d,k}\cdot a_{k,i} \cdot b_{i,1}}} \right)}+q(t)_{d1} & \cdots & {\sum\limits_{i = 1}^{r}\left( {\sum\limits_{k = 1}^{m}{x_{d,k}\cdot a_{k,i} \cdot b_{i,n}}} \right)}+q(t)_{dn}
\end{bmatrix}.$$

\textbf{Step 5:} The client receives the ciphertext result $ct_{x8}$ and decrypts it to get the result of multiplying the input matrix $x$ with the weights of server. Then continue the model's operations.

Since the parameters involved in ciphertext computation are reduced as much as possible by the inference structure, communication overhead accounts for the majority of time consumption, so it is important to reduce the number of communications during the design of the scheme. For each time the inference process of PLL is called, three times ciphertext multiplication by plaintext operations along with two times communications  will be executed. 

\section{Experiments}

\subsection{Experiment Settings}
\textbf{Hardware Setup.} We use two devices to separately act as client and server in the experiments. Device 1 has a hardware configuration as Intel(R) Xeon(R) Platinum 8375C CPU @ 2.90GHz with RAM 48G, NVIDIA RTX A6000 and the operating system is Ubuntu 22.04.3 LTS with Linux 6.5.0-28-generic. Device 2 has a hardware configuration as Intel(R) Xeon(R) Gold 6133 CPU @ 2.50GHz with RAM 24G, NVIDIA RTX 4090, and operating system is Ubuntu 22.04.4 LTS with Linux 6.5.0-21-generic. The network bandwidth between the two devices is 3 Mbps, thus the round trip time is  about 25.6ms on average.

\textbf{Models and fine-tuning methods.} We use ChatGLM2-6B as the base model, which is a Chinese-English bilingual dialog model released in June 2023 and open-sourced in July by Zhipu AI and Tsinghua KEG Lab\cite{du2022glm}. ChatGLM-6B is based on the General Language Model(GLM) architecture\cite{GLM} with 6.2 billion parameters. ChatGLM2-6B has more powerful performance, faster inference, and weights that are fully open to academic research. Our LoRA is implemented using LLaMA-Factory\cite{llamafactory} to add local fine-tuning effects, i.e., adding bypass matrices to the Q,K,V weight matrices in the self-attention mechanism of the base model.

\textbf{Implementation of FHE.} The FHE algorithm used in this paper is CKKS. The realization of our scheme uses two programming languages, C++ and Python for better efficiency. Thus we use Microsoft SEAL library\cite{sealcrypto} for CKKS algorithm in the C++ part, and SEAL-Python library(a python binding for the Microsoft SEAL library) in the Python part. The parameters of CKKS are set to satisfy that the max multiplication depth reaches 4.  

\subsection{Improvement of Computing and Communication Efficiency}
To further improve the efficiency of our scheme, we use parallelism to reduce time for ciphertext computations. However, since Python's threads are essentially OS-native threads, each thread needs to acquire a global interpreter lock when executing the code. Since Python programs have only one interpreter, different threads need to compete for the same lock in order to execute their respective code, making real parallel execution impossible. This limitation prevents Python's multi-threading from achieving real parallel computation. So we realize the server-side's algorithm by C++ and then use thread pooling to achieve high parallelism, which can increase the inference efficiency by 20\% to 40\%.

For ciphertext transmission, Python's built-in serialization may result in data loss when deserialized under C++. Therefore, we choose to communicate with Protocol Buffers (protobuf)\cite{protobuf}. Protobuf is a lightweight and efficient structured data storage format to serialize structured data, and is very suitable for data storage or Remote Procedure Call (RPC) data exchange format. This serialization format provides a more reliable serialization and deserialization mechanism to better protect the integrity and accuracy of the data.

\subsection{Application Scenario Setting}
This algorithm can be used in a variety of vertical fields because of the strong generalization capabilities brought by LLM, such as  classification of sensitive data. Classification of sensitive data is a data protection measure commonly used in the field of information security, the main content of which is to classify data into different levels according to its importance and sensitivity, and then implement different protection measures to ensure the confidentiality, integrity and availability of data. Sensitive data encompasses a wide variety of information including, but not limited to, personally identifiable information, financial records, health data, trade secrets and confidential government information. Exposure, tampering, or unauthorized access to this data can have a significant impact on individuals, organizations, or society as a whole. If one wants to use LLMs to improve efficiency and manipulate sensitive data, privacy-preserving techniques must be employed to secure these data. Therefore we use this as an example of an application scenario for our proposed scheme.

In this application, 'Input' is a message that needs to be classified and 'Output' is a symbol representing the category of this message. In the training process, we uses 5360 sets of non-repeated data to fine-tuning the LLM by LoRA. And 283 sets of non-repeated data are chosen as the test set. Then we update the model by our method to obtain a LLM that can classification sensitive data and protect both input and private weights at the same time. It is proved experimentally that the fine-tuned training under our method still converges, the training loss values are shown in Fig.\ref{loss}. And the result shows that, the updated model's accuracy reaches about 97\% which preserves its original functionality and the accuracy loss is consist with the case when using the dropout method in the reference process.

\begin{figure}
\centering
\includegraphics[scale=0.4]{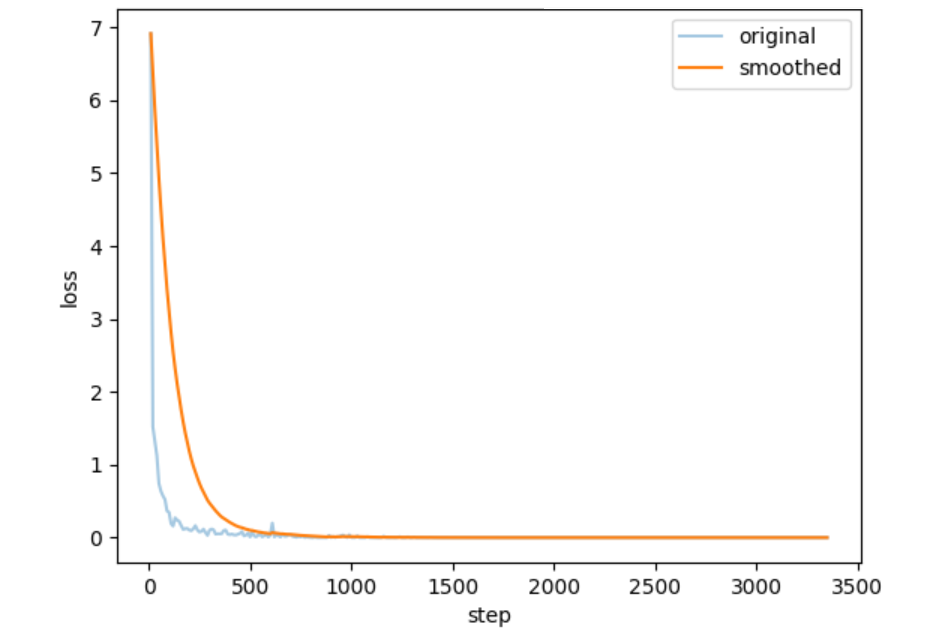}
\caption{loss values in training}
\label{loss}
\end{figure}

\subsection{Inference Speed}
Since the SIMD(Single Instruction Multiple Data)\cite{SIMD} in CKKS allows the homomorphic addition and multiplication to be operated in parallel, our method is good at dealing input of many tokens. The experiment demonstrates that when the number of tokens reaches 1000, the performance approaches the optimal level of 1.61 seconds per token where the rank in LoRA is set to 8. And the runtime with different token numbers is shown in Fig.\ref{fig:timeresult}. The efficiency of processing 500 tokens with different ranks in LoRA is shown in the Fig.\ref{fig:differentrank}. It can be seen that efficiency is approximately linearly correlated with the rank, which means that our method can be extended to linear layers with larger dimensions. As a result, our method also shows the potential to be applied for fine-tuned LLM where $r=d$. 

\begin{figure}
\centering
\begin{tikzpicture}
  \begin{axis}
    [
      xlabel=number of tokens,
      ylabel=time(s)/token,
      ymax=180
    ]  
    \addplot+[smooth]   
    coordinates
    {
      (50,315.586) (100,231.223) (200,188.6815)(500,171.5818) (700,168.355) (1000,160.6188)
    };
    
  \end{axis}
\end{tikzpicture}
\caption{Runtime with different number of tokens}
\label{fig:timeresult}
\end{figure}
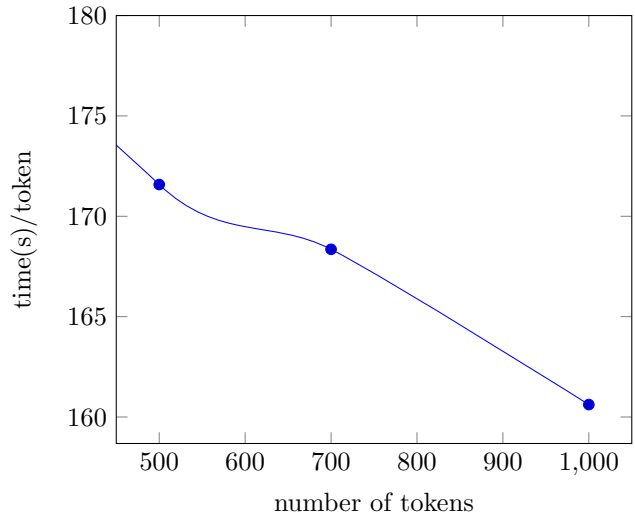

\begin{figure}
\centering
\begin{tikzpicture}
  \begin{axis}
    [
      xlabel=rank,
      ylabel=time(s)/token,
      ymax=7
    ]  
    \addplot+[smooth]   
    coordinates
    {
      (8,1.715818)(16,2.49392)(24,3.36596)(48,5.6657)
    };
    
  \end{axis}
\end{tikzpicture}
\caption{Runtime of 500 tokens with different rank}
\label{fig:differentrank}
\end{figure}
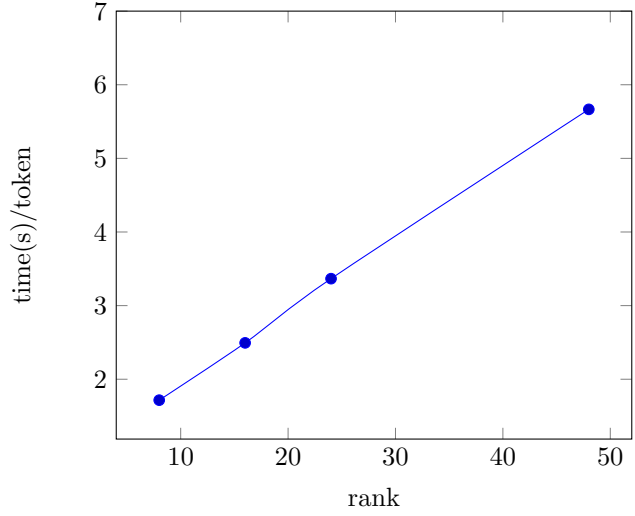

In Table \ref{table:table1} and \ref{table:table2}, we list the privacy inference algorithms for LLMs proposed in recent years, with the parameters of the LLMs and their efficiency. ``-" indicates that it is not explicitly stated in the original paper. We can see that the earlier proposed scheme is not suit for larger models. For large models with number of parameters more than a billion, PUMA is the most efficient method. And the comparison shows that our scheme takes about only 1.61s to generate 1 token in inference process, which is 127 times faster than PUMA. 

\begin{table}[]
\caption{Privacy inference algorithms implemented on smaller LLMs}  
\label{table:table1}  
\resizebox{\columnwidth}{!}{
\renewcommand{\arraystretch}{1.5}
\begin{tabular}{|c|c|c|c|p{1.3cm}|}
\hline
Time of publication & Schemes       & Experiments on    & quantity of parameters & Time           \\ \hline
2022              & THE-X\cite{THE-X}         & Bert-tiny         & $<$14.5M       & -              \\ \hline
2022              & Iron\cite{Iron}          & Bert-Large        & 340M                   & 6000s          \\ \hline
2023              & BumbleBee\cite{BumbleBee}       & GPT2-Base         & 117M                   & 204.6s         \\ \hline
2023              & CipherGPT\cite{CipherGPT}     & GPT2-Base         & 117M                   & 1500s          \\ \hline
2023              & PUMA\cite{PUMA}          & GPT2-Base         & 117M                   & 15.5s          \\ \hline
\end{tabular}}
\end{table}

\begin{table}[]
\caption{Privacy inference algorithms implemented on LLMs with billion+ parameter sizes}  
\label{table:table2}  
\resizebox{\columnwidth}{!}{
\renewcommand{\arraystretch}{1.5}
\begin{tabular}{|c|c|c|c|p{1.3cm}|}
\hline
Time of publication & Schemes       & Experiments on    & quantity of parameters & Time           \\ \hline
2023              & BumbleBee\cite{BumbleBee}       & LLaMA-7B          & 7B                     & 832.2s         \\ \hline
2023              & PUMA\cite{PUMA}          & LLaMA-7B          & 7B                     & 200s           \\ \hline
\textbf{2024}     & \textbf{Ours} & \textbf{ChatGLM2} & 6B                     & \textbf{1.61s} \\ \hline
\end{tabular}}
\end{table}

\section{Conclusion}
For the paradigm of fine-tuning a pre-trained LLM to improve the model's capability for a specialized domain, we combine privacy-preserving techniques include FHE and provable security theory, and propose an efficient and secure ciphertext inference scheme for LLMs that can protect the user-side inputs and the private training dataset on the server-side. Our method is to separate the low-rank matrix part obtained using LoRA fine-tuning from the pre-trained model, and only perform private computing in the former part to reduce the overhead. Besides, we also introduce a general method to transform a linear layer to PLL that preserves its original functionality and provides security against model extraction attacks. By using these technologies, the inference process of our scheme takes less than 1.61s for each token when the number of tokens is greater than 1000, which makes the scheme a practical privacy-preserving LLM. 

\bibliographystyle{alpha}
\bibliography{sample}

\newcommand{\etalchar}[1]{$^{#1}$}
\begin{thebibliography}{CvMBB14}

\bibitem[BCL{\etalchar{+}}18]{FHECNN}
Ahmad~Al Badawi, Jin Chao, Jie Lin, Chan~Fook Mun, Sim~Jun Jie, Benjamin Hong~Meng Tan, Xiao Nan, Khin Mi~Mi Aung, and Vijay~Ramaseshan Chandrasekhar.
\newblock The alexnet moment for homomorphic encryption: Hcnn, the first homomorphic {CNN} on encrypted data with gpus.
\newblock {\em {IACR} Cryptol. ePrint Arch.}, page 1056, 2018.

\bibitem[BRST20]{CLWE}
Joan Bruna, Oded Regev, Min~Jae Song, and Yi~Tang.
\newblock Continuous {LWE}.
\newblock {\em CoRR}, abs/2005.09595, 2020.

\bibitem[CBH{\etalchar{+}}22]{THE-X}
Tianyu Chen, Hangbo Bao, Shaohan Huang, Li~Dong, Binxing Jiao, Daxin Jiang, Haoyi Zhou, Jianxin Li, and Furu Wei.
\newblock {THE-X:} privacy-preserving transformer inference with homomorphic encryption.
\newblock In Smaranda Muresan, Preslav Nakov, and Aline Villavicencio, editors, {\em Findings of the Association for Computational Linguistics: {ACL} 2022, Dublin, Ireland, May 22-27, 2022}, pages 3510--3520. Association for Computational Linguistics, 2022.

\bibitem[CKKS17]{CKKS}
Jung~Hee Cheon, Andrey Kim, Miran Kim, and Yong~Soo Song.
\newblock Homomorphic encryption for arithmetic of approximate numbers.
\newblock In Tsuyoshi Takagi and Thomas Peyrin, editors, {\em Advances in Cryptology - {ASIACRYPT} 2017 - 23rd International Conference on the Theory and Applications of Cryptology and Information Security, Hong Kong, China, December 3-7, 2017, Proceedings, Part {I}}, volume 10624 of {\em Lecture Notes in Computer Science}, pages 409--437. Springer, 2017.

\bibitem[CLE{\etalchar{+}}19]{Memory}
Nicholas Carlini, Chang Liu, {\'{U}}lfar Erlingsson, Jernej Kos, and Dawn Song.
\newblock The secret sharer: Evaluating and testing unintended memorization in neural networks.
\newblock In Nadia Heninger and Patrick Traynor, editors, {\em 28th {USENIX} Security Symposium, {USENIX} Security 2019, Santa Clara, CA, USA, August 14-16, 2019}, pages 267--284. {USENIX} Association, 2019.

\bibitem[CTW{\etalchar{+}}21]{Carlini}
Nicholas Carlini, Florian Tram{\`{e}}r, Eric Wallace, Matthew Jagielski, Ariel Herbert{-}Voss, Katherine Lee, Adam Roberts, Tom~B. Brown, Dawn Song, {\'{U}}lfar Erlingsson, Alina Oprea, and Colin Raffel.
\newblock Extracting training data from large language models.
\newblock In Michael~D. Bailey and Rachel Greenstadt, editors, {\em 30th {USENIX} Security Symposium, {USENIX} Security 2021, August 11-13, 2021}, pages 2633--2650. {USENIX} Association, 2021.

\bibitem[CvMBB14]{RNNlong}
KyungHyun Cho, Bart van Merrienboer, Dzmitry Bahdanau, and Yoshua Bengio.
\newblock On the properties of neural machine translation: Encoder-decoder approaches.
\newblock {\em CoRR}, abs/1409.1259, 2014.

\bibitem[CWT{\etalchar{+}}24]{chaves2024txllmlargelanguagemodel}
Juan Manuel~Zambrano Chaves, Eric Wang, Tao Tu, Eeshit~Dhaval Vaishnav, Byron Lee, S.~Sara Mahdavi, Christopher Semturs, David Fleet, Vivek Natarajan, and Shekoofeh Azizi.
\newblock Tx-llm: A large language model for therapeutics, 2024.

\bibitem[DLZ{\etalchar{+}}23]{PUMA}
Ye~Dong, Wen{-}jie Lu, Yancheng Zheng, Haoqi Wu, Derun Zhao, Jin Tan, Zhicong Huang, Cheng Hong, Tao Wei, and Wenguang Chen.
\newblock {PUMA:} secure inference of llama-7b in five minutes.
\newblock {\em CoRR}, abs/2307.12533, 2023.

\bibitem[DPHZ23]{QLoRA}
Tim Dettmers, Artidoro Pagnoni, Ari Holtzman, and Luke Zettlemoyer.
\newblock Qlora: Efficient finetuning of quantized llms, 2023.

\bibitem[DQL{\etalchar{+}}22a]{du2022glm}
Zhengxiao Du, Yujie Qian, Xiao Liu, Ming Ding, Jiezhong Qiu, Zhilin Yang, and Jie Tang.
\newblock Glm: General language model pretraining with autoregressive blank infilling.
\newblock In {\em Proceedings of the 60th Annual Meeting of the Association for Computational Linguistics (Volume 1: Long Papers)}, pages 320--335, 2022.

\bibitem[DQL{\etalchar{+}}22b]{GLM}
Zhengxiao Du, Yujie Qian, Xiao Liu, Ming Ding, Jiezhong Qiu, Zhilin Yang, and Jie Tang.
\newblock {GLM:} general language model pretraining with autoregressive blank infilling.
\newblock pages 320--335, 2022.

\bibitem[Gen09]{Gentry}
Craig Gentry.
\newblock Fully homomorphic encryption using ideal lattices.
\newblock In Michael Mitzenmacher, editor, {\em Proceedings of the 41st Annual {ACM} Symposium on Theory of Computing, {STOC} 2009, Bethesda, MD, USA, May 31 - June 2, 2009}, pages 169--178. {ACM}, 2009.

\bibitem[HLC{\etalchar{+}}22]{Iron}
Meng Hao, Hongwei Li, Hanxiao Chen, Pengzhi Xing, Guowen Xu, and Tianwei Zhang.
\newblock Iron: Private inference on transformers.
\newblock In Sanmi Koyejo, S.~Mohamed, A.~Agarwal, Danielle Belgrave, K.~Cho, and A.~Oh, editors, {\em Advances in Neural Information Processing Systems 35: Annual Conference on Neural Information Processing Systems 2022, NeurIPS 2022, New Orleans, LA, USA, November 28 - December 9, 2022}, 2022.

\bibitem[HLL{\etalchar{+}}23]{CipherGPT}
Xiaoyang Hou, Jian Liu, Jingyu Li, Yuhan Li, Wen{-}jie Lu, Cheng Hong, and Kui Ren.
\newblock Ciphergpt: Secure two-party {GPT} inference.
\newblock {\em {IACR} Cryptol. ePrint Arch.}, page 1147, 2023.

\bibitem[HSK{\etalchar{+}}12]{Dropout}
Geoffrey~E. Hinton, Nitish Srivastava, Alex Krizhevsky, Ilya Sutskever, and Ruslan Salakhutdinov.
\newblock Improving neural networks by preventing co-adaptation of feature detectors.
\newblock {\em CoRR}, abs/1207.0580, 2012.

\bibitem[HSW{\etalchar{+}}21]{LoRA}
Edward~J. Hu, Yelong Shen, Phillip Wallis, Zeyuan Allen{-}Zhu, Yuanzhi Li, Shean Wang, and Weizhu Chen.
\newblock Lora: Low-rank adaptation of large language models.
\newblock {\em CoRR}, abs/2106.09685, 2021.

\bibitem[LAC21]{PromptTuning}
Brian Lester, Rami Al{-}Rfou, and Noah Constant.
\newblock The power of scale for parameter-efficient prompt tuning.
\newblock In Marie{-}Francine Moens, Xuanjing Huang, Lucia Specia, and Scott~Wen{-}tau Yih, editors, {\em Proceedings of the 2021 Conference on Empirical Methods in Natural Language Processing, {EMNLP} 2021, Virtual Event / Punta Cana, Dominican Republic, 7-11 November, 2021}, pages 3045--3059. Association for Computational Linguistics, 2021.

\bibitem[LHG{\etalchar{+}}23]{BumbleBee}
Wen{-}jie Lu, Zhicong Huang, Zhen Gu, Jingyu Li, Jian Liu, Kui Ren, Cheng Hong, Tao Wei, and Wenguang Chen.
\newblock Bumblebee: Secure two-party inference framework for large transformers.
\newblock {\em {IACR} Cryptol. ePrint Arch.}, page 1678, 2023.

\bibitem[LHR24]{lee2024platypusquickcheappowerful}
Ariel~N. Lee, Cole~J. Hunter, and Nataniel Ruiz.
\newblock Platypus: Quick, cheap, and powerful refinement of llms, 2024.

\bibitem[LL21]{PrefixTuning}
Xiang~Lisa Li and Percy Liang.
\newblock Prefix-tuning: Optimizing continuous prompts for generation.
\newblock {\em CoRR}, abs/2101.00190, 2021.

\bibitem[LLZ{\etalchar{+}}23]{PE-NN}
Kwok{-}Yan Lam, Xianhui Lu, Linru Zhang, Xiangning Wang, Huaxiong Wang, and Si~Qi Goh.
\newblock Efficient fhe-based privacy-enhanced neural network for ai-as-a-service.
\newblock {\em {IACR} Cryptol. ePrint Arch.}, page 647, 2023.

\bibitem[{Ope}23]{openai_privacy}
{OpenAI}.
\newblock {Privacy Policy}, {2023}.

\bibitem[Reg05]{LWE}
Oded Regev.
\newblock On lattices, learning with errors, random linear codes, and cryptography.
\newblock In Harold~N. Gabow and Ronald Fagin, editors, {\em Proceedings of the 37th Annual {ACM} Symposium on Theory of Computing, Baltimore, MD, USA, May 22-24, 2005}, pages 84--93. {ACM}, 2005.

\bibitem[Riv78]{1978R}
R~Rivest.
\newblock On databanks and privacy homomorphism.
\newblock {\em Foundations of Secure Computation}, 1978.

\bibitem[SEA23]{sealcrypto}
{M}icrosoft {SEAL} (release 4.1).
\newblock \url{https://github.com/Microsoft/SEAL}, January 2023.
\newblock Microsoft Research, Redmond, WA.

\bibitem[SV11]{SIMD}
Nigel~P. Smart and Frederik Vercauteren.
\newblock Fully homomorphic {SIMD} operations.
\newblock {\em {IACR} Cryptol. ePrint Arch.}, page 133, 2011.

\bibitem[SVL14]{S2Sequnce}
Ilya Sutskever, Oriol Vinyals, and Quoc~V. Le.
\newblock Sequence to sequence learning with neural networks.
\newblock In Zoubin Ghahramani, Max Welling, Corinna Cortes, Neil~D. Lawrence, and Kilian~Q. Weinberger, editors, {\em Advances in Neural Information Processing Systems 27: Annual Conference on Neural Information Processing Systems 2014, December 8-13 2014, Montreal, Quebec, Canada}, pages 3104--3112, 2014.

\bibitem[Var 7]{protobuf}
Kenton Varda.
\newblock Protocol buffers: Google's data interchange format.
\newblock [EB/OL], [2008-7-7].

\bibitem[VSP{\etalchar{+}}23]{attention}
Ashish Vaswani, Noam Shazeer, Niki Parmar, Jakob Uszkoreit, Llion Jones, Aidan~N. Gomez, Lukasz Kaiser, and Illia Polosukhin.
\newblock Attention is all you need, 2023.

\bibitem[ZZZ{\etalchar{+}}24]{llamafactory}
Yaowei Zheng, Richong Zhang, Junhao Zhang, Yanhan Ye, Zheyan Luo, and Yongqiang Ma.
\newblock Llamafactory: Unified efficient fine-tuning of 100+ language models.
\newblock {\em arXiv preprint arXiv:2403.13372}, 2024.

\end{thebibliography}

\end{document}